\documentclass[sigconf]{aamas}

\usepackage{soul}
\usepackage{url}
\usepackage{amsmath}
\usepackage{amsthm}
\usepackage{amsfonts}

\usepackage{nicefrac}

\usepackage{cleveref}
\crefname{algocf}{algorithm}{algorithms}
\Crefname{algocf}{Algorithm}{Algorithms}

\usepackage[ruled,linesnumbered,commentsnumbered,noline]{algorithm2e}
\DontPrintSemicolon
\SetNoFillComment
\usepackage{float}
\usepackage{bm}

\newcommand{\N}{\ensuremath{\mathcal{N}}} 
\NewDocumentCommand{\apr}{o o}{%
 \ensuremath{\alpha\IfValueT{#1}{_{#1}}\IfValueT{#2}{(#2)}%
}}
\NewDocumentCommand{\prd}{o o}{%
 \ensuremath{b\IfValueT{#1}{_{#1}}\IfValueT{#2}{(#2)}%
}} 
\newcommand{\nullagent}{\ensuremath{\emptyset}} 
\newcommand{\bigbag}{\ensuremath{\mathcal{B}}} 
\newcommand{\E}{\ensuremath{\mathbb{E}}} 

\usepackage{balance} 

\newtheorem{theorem}{Theorem}
\newtheorem{lemma}{Lemma}
\newtheorem{corollary}{Corollary}

\setcopyright{rightsretained}
 \acmConference[GAIW'26]{Appears at the 8th Games, Agents, and Incentives Workshop (GAIW-26). Held as part of the Workshops at the 25th International Conference on Autonomous Agents and Multiagent Systems.}{May 2026}{Paphos, Cyprus}{Armstrong, Curry, Hosseini, Mattei, Tsang, Wąs (Chairs)}

\title[\textsc{PeerBTS}: Incentivizing Effort in Strategyproof Peer Selection]{\textsc{PeerBTS}: Incentivizing Effort in Strategyproof Peer Selection}

\author{Harper Lyon}
\affiliation{
  \institution{Tulane University}
  \city{New Orleans}
  \country{United States of America}}
\email{hlyon@tulane.edu}

\author{Omer Lev}
\affiliation{
  \institution{Ben-Gurion University}
  \city{Beersheba}
  \country{Israel}}
\email{omerlev@bgu.ac.il}

\author{Nicholas Mattei}
\affiliation{
  \institution{Tulane University}
  \city{New Orleans}
  \country{United States of America}}
\email{nsmattei@tulane.edu}

\begin{abstract}
 Peer selection, the evaluation and selection of agents by their peers, is an important problem in the field of computational social choice; with applications to grading in massively online courses (MOOCs) and academic peer review. Current existing algorithmic and empirical work focuses on developing and analyzing novel \emph{strategyproof} mechanisms, wherein no agent has an incentive to misreport their evaluations. However, the majority of published mechanisms share a flaw: they do not \emph{reward} agents for any effort expended during the evaluation process. In cases where high quality evaluations are costly to produce this missing incentive fails to align agents with an overall goal of accurate selection. To address this gap we first prove theoretically that incentivizing effort in peer selection requires information beyond a single evaluation. We then propose \textsc{PeerBTS}, a mechanism that combines a peer-prediction lottery, leveraging work on the Robust Bayesian Truth Serum, with any existing peer-selection mechanism to incentivize effort while remaining Bayes-Nash incentive compatible. We find that while an incentive-compatible peer-selection mechanism using agent predictions to incentivize effort is possible it requires adjustments to the assumed problem context and limits other mechanistics properties. We additionally present a series of non-strategic simulations to validate incentives and evaluate the performance of PeerBTS relative to existing strategyproof peer selection mechanisms. Finally, we discuss the results of an initial study on the validity of peer-prediction from a small academic workshop.
 \end{abstract}

\keywords{Peer Selection, Peer Review, Prediction Scoring, Incentive Compatibility}

\newcommand{\BibTeX}{\rm B\kern-.05em{\sc i\kern-.025em b}\kern-.08em\TeX}

\begin{document}


\pagestyle{fancy}
\fancyhead{}


\maketitle 


\section{Introduction}

Peer evaluation -- the process by which a group evaluates its own members and chooses the best of them -- has been used by people for millennia. The various ``mechanisms'' people used ranged from hoping for divine interventions and lotteries\footnote{From the bible's Numbers chapter 17 and Joshua chapter 7 to Delphic oracles studying goat entrails and the Athenian \emph{Boule} \cite{MB24} and the Venetian Doge selection mechanism \cite{MG07}, they all tried to improve their selection process with lotteries.}, to attempts at forms of elections \cite{merrifield2009telescope}. However, the salience of this problem became much more significant as technology enabled the groups conducting peer evaluation to grow dramatically, e.g., Massive Open Online Courses (MOOCs) like Coursera. Needing a way to grade the tasks done by huge classes (up to several million), practitioners have turned to students to rate each other \cite{piech2013tuned,luo2014peer,caragiannis2015aggregating}. Similarly, Computer Science conferences face an increasing demand for reviewers, and many have resorted to requiring those submitting papers to also review papers. 

As the relevance of peer evaluation grew, the issues with existing systems became clearer. First and foremost, they are not strategyproof, and thus may incentivize agents to misreport their views, as it may improve their chances of getting chosen themselves \cite{DBLP:journals/corr/ArdabiliL13a,gurevych2024reviewer}. Consider, for example, the system used in many CS conferences, in which a submission's ``grade'' is an average of its reviewers' grades, and acceptance is for the top $k$, for some value of $k$. All reviewers will be better off giving $0$ as their grade to anyone, as while they cannot affect the grade their submission receives, they can lower the threshold grade to be in the top-$k$, so their work might be selected. This is typically called ``torpedo reviewing'' in the literature and has been documented at conferences \cite{DBLP:conf/nips/JecmenZLSCF20}.

Several strategyproof mechanisms have been suggested for peer evaluation, in which no agent can improve their chance of selection by changing their review on others \cite{DBLP:conf/tark/AlonFPT11,crediblesubset15,matteiPS2019,lev2023peernomination} (more coverage is available in several overviews \cite{lev2024impartial,DBLP:journals/corr/abs-2210-01984}). The problem with strategyproof mechanisms is that, by definition, they force a separation between the evaluations an agent gives on others and the possibility of that same agent being selected. This separation is desirable in the mechanism design literature as the assumption is that removing this connection will cause agents to report truthfully \cite{faliszewski2010ai}.

However, when there is a cost -- whether of time, effort, or monetary value (as with reviewing papers, grant proposals, or Coursera student work) -- the disconnect between evaluations and agents' personal benefit is detrimental to the quality of the peer evaluation since there is no \emph{within mechanism incentive to exert effort}. Previous attempts at tying the evaluations to rewards for the agent providing reviews, e.g., \citet{merrifield2009telescope,srinivasan2021auctions}, use things like rewarding agents for giving a review close to the average of others agents' reviews are inherently not strategyproof, as agents are being rewarded \emph{not for giving their truthful views, but for being able to predict consensus opinion}. Consider a reviewer for a CS conference using such a reward scheme, who finds a fundamental problem with a paper's proof. If the reviewer thinks other reviewers won't pick up on the proof's problem, they are better-off ignoring it and giving the proposal the same review as the others, as that will increase their own chances of selection.

\paragraph{Contribution.} We propose \textsc{PeerBTS}, a solution using a Bayesian Truth Serum (BTS) mechanism \cite{prelec2004}, which we pair with \textsc{PeerNomination} \cite{lev2023peernomination}, a known strategyproof peer evaluation mechanism.\footnote{We will see our method can incorporate any peer evaluation mechanism, however for clarity we focus on \textsc{PeerNomination}.} We show that existing strategyproof mechanisms do not incentivize effort and that, if the only data given by agents is their review of others, it is \emph{impossible} to create a mechanism that incentivizes truth-telling while also incentivizing agents to put in effort to evaluate others. We give a general framework to design reward mechanisms as a combination of scoring and selection mechanisms, prove when and which properties can be retained; and instantiate and evaluate one such mechanism -- \textsc{PeerBTS} -- that is Bayes-Nash incentive compatible.

\section{Related Work}

Peer selection has a rich history, but as an object of academic study has typically been modeled in the context of academic peer review. 
Peer review as done by independent panels has been an object of much study in the past \cite{CCS81a,MEFF90a,WeWo97a}, but in the past few decades there has been a move for a broad base of reviewers, leading many organizations to move beyond expert panels and adopt true peer selection by requiring applicants to also serve as reviewers. Moving towards community review and away from hand selected expert panels puts new stresses on the peer review process, including issues of incentive and dishonesty, which is where the tools of mechanism design have proved useful. Much of formal work was first inspired by \citet{merrifield2009telescope}, who proposed a basic Borda mechanism to adjudicate the allocation of time on telescopes. A variety of more sophisticated mechanisms have been proposed, including \textsc{CredibleSubset} \cite{crediblesubset15} and \textsc{ExactDollarPartition} \cite{matteiPS2019} (see \citet{DBLP:journals/corr/abs-2210-01984} for a recent survey). This research focuses primarily on strategyproofness, though there are further concerns, such as group fairness \cite{DBLP:conf/nips/0001M023a} or handling conflict of interest \cite{DBLP:journals/corr/abs-1806-06266,DBLP:conf/nips/JecmenZLSCF20}. Our work is similar, as we are concerned with an property other than strategyproofness, namely incentivization \emph{within} the mechanism, which has not been studied.

Beyond the traditional research in peer selection we also pull from the literature on group predictions/forecasting. Mechanisms for high quality group predictions are not a new idea and have been explored for decades \cite{degroot1974reaching,genest1986combining}. There are a number of traditions in this space, such as the use of markets for predictions \cite{wolfers2004prediction,arrow2008promise}, but we draw from 
a related tradition of survey scoring, where the goal is to solicit honest responses from participants. The leading work in this space is the Bayesian Truth Serum \cite{prelec2004}, which asks agents to to make a prediction about the population response rate. Further work on group prediction -- most notably the Good Judgment Project \cite{tetlock2016superforecasting} and related efforts -- has produced a wealth of expansions on this idea, including the Robust Bayesian Truth Serum \cite{robustBTS,radanovic2013robust} which we use as a core component of \textsc{PeerBTS}. Research in this area has proven that the techniques of mechanism design can be applied profitably to group decision making, and so it is a natural choice for our extension of traditional peer selection mechanisms.

A related, though distinct, line of research focuses on incentivizing effort/accuracy through more active means. In particular \citet{precisioninpeergrading2024} draw on work on Massive Open Online Courses (MOOCs) \cite{tunedpeerassement2013} to present a mechanism which incentives effort from self-interested participants. While our goals are similar our approaches are quite different. In particular, Chakraborty et al. use known-value probe papers to directly estimate agent effort and reward agents with bonuses exterior to the selection process itself. This approach is well designed and could be quite profitably implemented in cases where the center has some access to ground truth, making it a good fit for MOOCs. Our own mechanism is more setting agnostic, as we do not rely on having any access to the ground truth nor on exterior rewards, though we do accept several tradeoffs for this flexibility.


\section{Preliminaries and Notation}

Our running example is a conference where agents submit proposals, which they wish to see selected, and also serve as reviewers for the conference \cite{DBLP:journals/cacm/Shah22}. For ease of exposition, we suppose that agents are providing \emph{reviews} on a set of \emph{proposals} that are under review. The review is the act of submitting their belief about a proposal's quality, and we will also ask agents for a prediction of the reviews of other agents as part of the mechanism. We will also refer to the \emph{review board} as the set of agents reviewing a particular proposal, and the review assignment of a single agent as the set of proposals they are to review. Unlike previous peer evaluation research, we assume there is some cost (effort, time) for agents to generate their reviews, and an agent can choose to pay this cost and generate a quality review, or not and generate a low quality one. Additionally, taking a standard mechanism design approach, agents will only do what they are \emph{incentivized} to do \cite{borgers2015introduction}. Note that we are not disparaging a typical conference reviewer who may have intrinsic or altruistic motives for submitting good reports, rather we are working from first principles in incentive and mechanism design.

Following the broader literature, we define a peer selection problem over a set of agents $\N$, with $|\N| = n$. Each agent $i \in \N$ has a private evaluation function $\sigma_i: \N \rightarrow \mathbb{R}$, representing their perception (review) of the quality of all agents involved which can be viewed as a (strict) ordinal ranking function $\N \rightarrow \{0,1,\ldots,n\}$ by ordering the agents by quality (breaking ties in some manner). Taken together, we refer to the set of all evaluating functions of all agents as a \emph{profile}, $\sigma$, and the set of all possible profiles as $\Sigma$.

In peer selection there is a specified number $k$ of selections or winners, $\mathcal{W}$, where $\mathcal{W} \subseteq \N$ and, ideally, $|\mathcal{W}| = k$. Informally, we want to select the ``best'' $k$ sized subset of proposals. To do so, we define a selection mechanism as a function from profiles to a selected set, formally $f: \Sigma \rightarrow 2^{\N}.$ We call any such function $f$ \emph{exact} if $\forall \sigma \in \Sigma, \ |f(\sigma)| = k $.

In real life instances of peer selection we typically do not solicit a complete evaluation or ranking of all involved agents, as this would be far too onerous. Instead we define a review assignment function $A: \mathcal{N} \rightarrow 2^{\mathcal{N}}$ such that $\forall i \in \mathcal{N}, i \notin A(i)$. This assigns each agent to a review assignment $A(i)$, and we can also refer to the set of reviewers, review board, assigned to a specific proposal as $A^{-1}(i)$. We further restrict the review assignment functions to be $m$-regular, where $\forall i \in \mathcal{N},\, |A(i)| = m$. In this way $m$ parametrizes the number of proposals that any agent is expected to review, and $m$ tends to be relatively small compared to $n$ as a matter of practicality.

Unlike previous peer evaluation research, we assume there is some cost to agent $i$ to obtain $\sigma_{i}$, akin to putting in time to read a paper for a conference review. The agent can chose not to put in this effort, resulting in low-quality or random data going into the mechanism. Hence our central question is \emph{can we design mechanism incentives to encourage agents to put in the effort to generate high quality reviews without destroying the incentives to be truthful?}

\subsection{Desirable Properties}
Ideally, peer selection mechanism should satisfy \textbf{strategyproofness}, under the assumption that the only thing agents desire is to have their proposal selected. Informally, this means that no agent involved in the selection process has an incentive to misreport their preferences to increase their chance of selection; hence, we can ideally expect honest reports from every reviewer to a strategyproof mechanism. Formally, a mechanism $f$ is strategyproof if
\small
\begin{align*}
\forall i \in N, \forall \sigma_i,\quad & i \notin f((\sigma_1, \ldots, \sigma^*_i, \ldots, \sigma_n)) \\
\Rightarrow\quad & i \notin f((\sigma_1, \ldots, \sigma_i, \ldots, \sigma_n))
\end{align*}
\normalsize
where $\sigma_i^*$ is agent $i$'s true private ranking function. In probabilistic mechanisms we require that $i$ cannot increase their probability of being selected by misreporting.

A related, weaker, property is \textbf{incentive compatibility}, where truthfulness is a Nash equilibrium, i.e., when all are truthful it is never beneficial for any agent to deviate,
\small
\begin{align*}
\forall i \in \N, \forall \sigma_i,\quad & P(i \in f((\sigma_1^*, \ldots, \sigma_i^*, \ldots, \sigma_n^*))) \\
\geq\quad & P(i \in f((\sigma_1^*, \ldots, \sigma_i, \ldots, \sigma_n^*)))
\end{align*}
\normalsize
An additional desirable property is that no particular agent have more influence than others -- \textbf{anonymity}.


\subsection{\textsc{PeerNomination}}

While \textsc{PeerBTS} is a general mechanism that can be combined with any strategyproof (or even incentive compatible) mechanism, we instantiate and evaluate it with \textsc{PeerNomination} \cite{lev2023peernomination}. Broadly, \textsc{PeerNomination} is approval based -- agents either approve others or not, and do not rank them or give them more granular grades. Agents approve the top $\lfloor \nicefrac{k}{n} \cdot m\rfloor$, i.e., the rounded down share that needs to be selected from the population in their review assignment. The agent in the next position, rank $\lfloor \nicefrac{k}{n} \cdot m\rfloor+1$, is considered partially approved -- given $(\nicefrac{k}{n} \cdot m - \lfloor \nicefrac{k}{n} \cdot m\rfloor)$ points. The set of accepted agents are those who have been approved with $\nicefrac{m}{2}$ points, i.e., approved by most of their reviewers.

Note that \textsc{PeerNomination} is not an exact mechanism, as it does not guarantee that exactly $k$ agents are chosen. The mechanism has further variants, allowing for a weighing of the reviews of different agents, but that is outside the scope of this paper. As \textsc{PeerNomination} uses ranking information for its outcome, we shall assume from now on that $\sigma_{i}(j)$ returns the rank of agent $j$ by agent $i$.

\section{Existing Mechanisms Lack Incentives}

%

Our first key result is that mechanisms in which agents are simply asked to review other agents (via grade, ranking, approval, etc.) can never be both strategyproof (or even the weaker incentive compatible) and incentivize agents to put in the effort that results in good reviews.

\begin{theorem}
A peer-evaluation mechanism cannot be anonymous and strategyproof (or incentive-compatible) and incentivize agents to produce a high quality (i.e., truthful) $\sigma$, if its input is only $\sigma$.
\end{theorem}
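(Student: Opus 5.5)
The plan is to argue by contradiction. I will formalize ``incentivizing effort'' as follows: an agent's expected utility when reporting its true (high-effort) evaluation $\sigma_i^*$ must be strictly larger than its expected utility when reporting some low-effort evaluation $\tilde\sigma_i$ that it could produce at no cost, for instance a random ranking or a fixed default ranking. Otherwise no agent facing a positive effort cost $c>0$ would pay it. The only utility an agent derives from the mechanism is the probability that its own proposal is selected, since the input is only $\sigma$ and there is no side channel such as payments or predictions. So any effort incentive has to appear as a difference in $P(i \in f(\sigma))$ between $\sigma_i^*$ and $\tilde\sigma_i$.

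First, apply strategyproofness (or, in the weaker case, incentive compatibility). For every $\sigma_i$, including the low-effort report $\tilde\sigma_i$, we have $P(i\in f(\sigma_i^*,\sigma_{-i})) \ge P(i \in f(\tilde\sigma_i,\sigma_{-i}))$. The key observation is that the argument runs in both directions. Take an agent whose \emph{true} evaluation happens to coincide with $\tilde\sigma_i$. This is possible, since $\tilde\sigma_i$ is just some ranking in the domain and the mechanism cannot tell whether it came from effort. For that agent, strategyproofness says that reporting $\tilde\sigma_i$ is weakly better than reporting $\sigma_i^*$. The mechanism sees only the report, not the effort or the private truth, so both inequalities concern the same function $f$ on the same inputs. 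Hence $P(i\in f(\sigma_i^*,\sigma_{-i}))=P(i\in f(\tilde\sigma_i,\sigma_{-i}))$ for all pairs of reports. In other words, under strategyproofness an agent's own selection probability is independent of its own report. This is the separation noted in the introduction. As a result, the gain from effort is $0 < c$, and effort is not incentivized.

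Anonymity is needed to rule out a loophole: an agent's outcome might otherwise depend on how its report compares with others'. That comparison is exactly consensus-matching, or some rule that singles out particular agents whose reviews ``count'' as a gold standard. My plan here is to observe that strategyproofness already forbids any dependence of $i$'s outcome on $\sigma_i$. Anonymity then prevents the designer from escaping this by treating certain reviewers as trusted references, or by conditioning on identities to give $i$ a reward through some other agent's selection. I expect to invoke it only briefly.

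The main obstacle is the incentive-compatible (Bayes–Nash) case, where the two-sided swap argument holds only in expectation over others' truthful reports. I would first try the same symmetry argument in expectation. Suppose the prior is such that, from $i$'s own perspective before exerting effort, $\tilde\sigma_i$ is a possible truthful type. Then IC at that type gives the reverse inequality in expectation, and the effort gain again vanishes. If this requires an assumption, namely that the low-effort report lies in the support of truthful types, I will state it explicitly as the modeling condition behind ``input is only $\sigma$.''
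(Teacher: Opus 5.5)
Your strategyproof case is the paper's argument. Your two-sided swap (set the true type to $\tilde\sigma_i$ to get the reverse inequality) is exactly the justification the paper leaves implicit when it asserts that $i$'s selection probability is not influenced by its review. Anonymity does no real work in either version. The same swap also handles incentive compatibility as the paper defines it: Nash at the truthful profile, required for every possible true profile. There the argument is pointwise in $\sigma_{-i}$, so you never need to pass to expectations.

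The gap is in your Bayes--Nash branch. Write $a=\sigma_i^*$ and $b=\tilde\sigma_i$. Interim IC at type $a$ gives $\mathbb{E}[P(i\in f(a,\sigma_{-i}))\mid\sigma_i^*=a]\ge\mathbb{E}[P(i\in f(b,\sigma_{-i}))\mid\sigma_i^*=a]$. IC at type $b$ gives the reverse inequality, but with the expectation taken conditional on $\sigma_i^*=b$. These are averages under different conditional laws of $\sigma_{-i}$, so they do not chain to an equality, and your support condition on $\tilde\sigma_i$ does not repair this.

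Without more, the conclusion is actually false. Suppose evaluations are correlated, as under the admissible priors the paper itself uses for RBTS, where a common state $t$ drives all approvals, and suppose the designer knows the prior. Consider a Miller--Resnick--Zeckhauser-type rule that scores $i$'s approval against a co-reviewer's approval, using a strictly proper scoring rule applied to the induced posterior. It takes only $\sigma$ as input and is Bayes--Nash incentive compatible. It also gives an informed truthful report a strictly higher expected score than any uninformed report. Converting those scores linearly into selection lottery tickets is anonymous and rewards effort through selection alone.

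The missing hypothesis is the one the paper's proof invokes: $i$'s evaluation is private and does not depend on others'. Under independence, $Q_i(r)=\mathbb{E}_{\sigma_{-i}}[P(i\in f(r,\sigma_{-i}))]$ is the same function in both inequalities. They then give $Q_i(a)=Q_i(b)$, so any uninformed report earns the same ex-ante selection probability and the effort gain vanishes.
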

\begin{proof}
Since the mechanism is anonymous, there is a profile $\bar{\sigma}\in \Sigma$ for which agent $i$'s review will influence the outcome. For any such profile, because the mechanism is strategyproof (incentive-compatible), and agent $i$'s evaluation is private and does not depend on others, then agent $i$'s probability of being selected is not influenced by their review. So whether $i$ reports some randomized $\sigma'_{i}$ or an exact $\sigma_{i}^{*}$, its chances of being selected remain the same. Since the only thing the agent cares about is being selected, and we assume there is a cost to finding $\sigma_{i}^{*}$, there is no incentive to report $\sigma_{i}^{*}$, and the mechanism has no way of incentivizing agents.
\end{proof}

%
%

\begin{corollary}\label{2bitsCor}
A peer-evaluation mechanism that is anonymous, strategyproof (or incentive-compatible) and incentivizes agents to produce a high quality $\sigma$, requires agents to provide at least one more piece of information besides $\sigma_{i}$.
\end{corollary}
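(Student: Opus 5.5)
The plan is to derive Corollary~\ref{2bitsCor} as the contrapositive of the preceding Theorem. The Theorem says that if the mechanism's only input is the profile $\sigma$, then it cannot be anonymous, strategyproof (or incentive-compatible), and effort-incentivizing all at once. So suppose a mechanism has all three properties. Then its input cannot consist only of $\sigma$. I will make precise what ``input is only $\sigma$'' means: the mechanism's outcome is a function of the reported profile $(\sigma_1,\ldots,\sigma_n)$ and nothing else. In particular, each agent $i$'s contribution to the input is exactly their report $\sigma_i$.

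First, I fix a mechanism $M$ that is anonymous, strategyproof (or incentive-compatible), and incentivizes agents to produce a high-quality $\sigma$. Second, I apply the Theorem: $M$'s input cannot be the profile alone. So $M$ depends on some datum $x$ that is not a function of $(\sigma_1,\ldots,\sigma_n)$. Third, I argue that $x$ must be something elicited from agents. The mechanism designer has no outside access to ground truth, and external randomness or fixed parameters chosen by the center do not depend on agent $i$'s private effort. Adding them therefore changes nothing in the Theorem's argument: with such data the mechanism is still, from $i$'s viewpoint, a map from $i$'s report to a selection probability. Fourth, I conclude that at least one agent must submit some extra report $x_i$ beyond $\sigma_i$. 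By anonymity, the requirement is symmetric across agents, so each agent $i$ provides at least one more piece of information besides $\sigma_i$. This is exactly the statement; the prediction reports used later in \textsc{PeerBTS} are one instance.

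The main obstacle is the third step. I need to rule out the possibility that the ``extra input'' is non-agent data, such as the center's own randomization or the fixed assignment $A$, which would make the contrapositive hold only vacuously. My first attempt will be to observe that the Theorem's proof does not use determinism. It uses only the fact that under strategyproofness $i$'s selection probability is unaffected by its report. That fact continues to hold when we condition on any exogenous data independent of $i$'s effort, so such data cannot restore an effort incentive. If this observation is stated cleanly, the contrapositive forces the additional information to come from the agents themselves.
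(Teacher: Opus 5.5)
Your proposal is correct and takes the same route as the paper. The paper gives no separate proof: it presents Corollary~\ref{2bitsCor} as the immediate contrapositive of the preceding Theorem, and your extra steps make explicit what it leaves implicit, namely ruling out center-side randomness or parameters as the ``extra input'' and using anonymity to pass from some agent to every agent.
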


Throughout the paper we assume that providing a high quality review $\sigma\in\Sigma$, requires some sort of effort, e.g., reading a paper to give a review on it, and that if this effort is made, it will result in a truthful and high quality $\sigma^{*}$. This holds even if that effort was motivated by a different incentive. For example, if a mechanism gives a bonus to an agent able to estimate what the average reviewer will say of a proposal (as is the case in the NSF pilot study \cite{merrifield2009telescope}), the agent will need to read the proposal to give an answer, but by reading it, they are now able to write a high quality review, even though they have not directly been rewarded for their review. Therefore any incentive towards effort will have a side effect of improving review quality.


\section{\textsc{PeerBTS}: Peer Selection with Bayesian Truth Serum}

As \Cref{2bitsCor} points out, we must add another piece of information to the mechanism in order to have one that incentivizes both truth-telling and high-quality reviews. To do so, we combine a Bayesian Truth Serum mechanism with an existing strategyproof peer-evaluation algorithm. We used \textsc{PeerNomination} here, in order to present an explicit implementation, but one could use others.

\subsection{Mechanism Overview}

\begin{figure*}[t]
 \centering
 \includegraphics[width=0.6\textwidth]{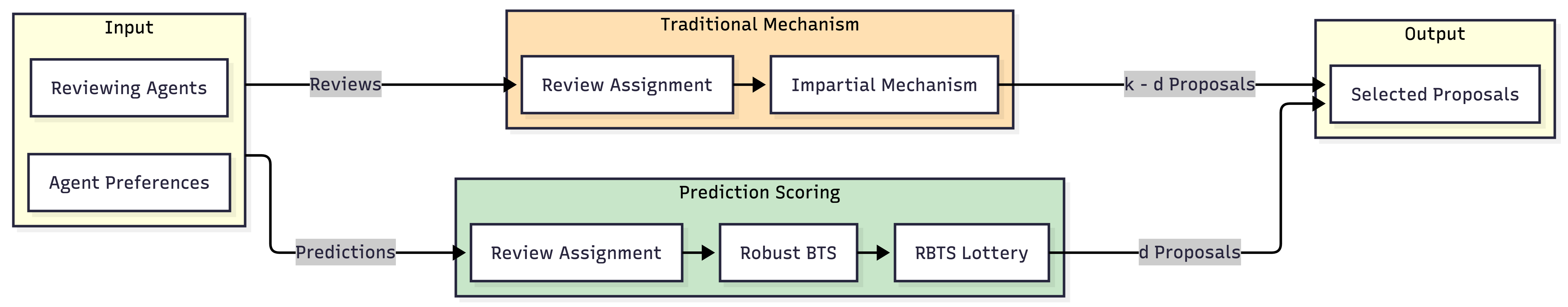}
 \caption{Overview of information flow in the PeerBTS mechanism.}
 \label{fig:mechanism_flow}
 \Description{Flowchart representing how reviews and predictions are split between mechanisms in PeerBTS.}
 \vspace{-10pt}
\end{figure*}

\textsc{PeerBTS} works at a high level by taking any existing strategyproof mechanism $f(\sigma)$ and a reward mechanism $h(\sigma,b)$ which relies on $\sigma$ and another piece of information $b$. We build \textsc{PeerBTS} based on the Bayesian Truth Serum (BTS) family of mechanisms which ask agents for their opinions and their estimates of the average of others' opinions, rewarding agents based the accuracy of their predictions.

\textsc{PeerBTS} first constructs $S_{f}$, the set of selected agents using $f(\sigma$), with the number of selected being $k-d$, where $d<k$ is some integer parameter. Then, using $h(\sigma,b)$, it creates a set $S_{h}$ of $d$ agents rewarded by $h$ (see \Cref{fig:mechanism_flow}). The set of winners is $S_{f}\cup S_{h}$, notice that even if $f$ was an exact mechanism, the combined set may be smaller than $k$, as sometimes $S_{f}\cap S_{h}\neq\emptyset$.

\subsection{Reward Mechanism: Predict and Evaluate}

In \textsc{PeerBTS} we use approval functions and predictions of approval rates as the additional information required by \Cref{2bitsCor}. We first define the necessary part for \textsc{PeerNomination}: an approval function, indicating which agents rank proposal $i$ at rank $q$ or above: $\apr[i]: A(i) \rightarrow \{0, 1\}.$ Typically this is defined over a quota $q \in (0, m)$ such that 
\small
\[
\apr[i][j] = \begin{cases}
 1 & \text{if } \sigma_j(i) \leq q\\
 0 & \text{otherwise}
\end{cases}
\]
\normalsize
with variations allowing for probabilistic approval when $q$ is a non-integer. We denote the set of all such approval functions as the approval profile $\apr$. 

We also collect reviewers' predictions of approval rates. We define the belief function $\prd[i]: A(i) \rightarrow [0,1]$ such that $\prd[i][j]$ represents the proportion of reviewers $j$ believes will approve proposal $i$. Formally, $\prd[i][j] = \mathbb{E}_j\left[\frac{1}{m} \sum_{s \in A^{-1}(i)} \apr[s][j]\right] \forall j \in A(i)$. We call the collection of belief functions a belief profile $\prd = \{\prd[1], \prd[2], \ldots, \prd[n]\}$.

\subsection{Reward Mechanism: Prediction Scoring}

To reward agents for their predictions we define a score function, which assigns a numerical score to each agent according to the reviews and predictions. Formally:
$h: \Sigma \times \beta \rightarrow \mathbb{R}^{|\mathcal{N}|}.$

We use the Robust Bayesian Truth Serum (RBTS) \cite{robustBTS}, a variant of the well known Bayesian Truth Serum \cite{prelec2004}, to score agent reviews and predictions. RBTS is preferable in this context as it is incentive compatible even when dealing with small populations. RBTS takes as input the set of approval functions $\apr$ and belief functions $b$ and assigns each reviewer a score for each response and prediction within their review assignment. Considering the reviews of a single agent $i$, each agent $j$ in the review board $A^{-1}(i)$ is assigned a reference agent $k = j + 1 \mod |A^{-1}(i)|$ and peer agent $l =j + 2 \mod |A^{-1}(i)|$, both their review and prediction are scored against $j$ and $k$. \Cref{alg:rbts} gives our formal implementation.

\begin{algorithm}[t]
\small
\SetAlgoLined
\caption{Robust Bayesian Truth Serum}
\label{alg:rbts}

\KwIn{Agent $i \in N$, $m$-regular review assignment $A$, approval profile $\apr$, belief profile $\prd$, proper scoring rule $R$}
\KwOut{Score vector $<s_j \in [0,2] | \forall j \in A^{-1}(i)>$}

Scores $:= <0 \forall j \in A^{-1}(i)>$ \;
\tcp*[l]{Score agents in review board}
\For{$j \in A^{-1}(i)$}{ 
 $k := j + 1 \mod m$\;
 $l := j + 2 \mod m$\;
 $\delta := \min(\prd[k][i], 1 - \prd[k][i])$\;
 \uIf{$\apr[i][j] = 1$}{
 $w_j := \prd[i][k] + \delta$ \;}
 \uElseIf{$\apr[i][k] = 0$}{
 $w_j := \prd[i][k] - \delta$ \;}
 Scores$(j) = R(w_j, \apr[i][l]) + R(\prd[i][j], \apr[i][l])$
}
\Return Scores
\vspace{-4pt}
\end{algorithm}
\normalsize
For our purposes what is most important is that RBTS is Bayes-Nash incentive compatible for any population of $3$ or greater \cite[Theorem 9]{robustBTS}, so as long as every agent $a \in \N$ has a review board $|A(a)| \geq 3$, we can safely use RBTS as a scoring mechanism. 

\subsection{Combining Reward \& Quality Selections}

Combining the output of the peer selection mechanism and the Robust-BTS score in a way that maintains incentive compatibility is not trivial: There are subtle issues in many of the obvious ways in which one might combine these two functions. The most straightforward approach is to use $h(\sigma, b)$ to select the top $d$ agents from $\N \backslash f(\sigma)$, to avoid double selection and guarantee exactness. This. however, is not strategyproof nor incentive compatible. As a counter example, consider a 2-selection setting with four agents ($1, 2, 3, 4$), strategyproof selection mechanism $f$, and scoring function $h$. On truthful reviews from all agents we get profile $\sigma$ and similarly on effortful predictions we get $b$. Suppose that these two inputs give us the rankings: $f(\sigma) = [1, 2, 3, 4]$ and $h(\sigma, b) = [2, 3, 1, 4]$. Following our simple combination we take the top agent from $1 \in f(\sigma)$ and then the top agent of $h(\sigma, b)\setminus \{1\}$. On the truthful rankings this would give us the set ${1, 2}$, the two top agents of their respective rankings.

Now say that by misreporting, agent $3$ can produce the profile $\sigma'$ and ranking $f(\sigma') = [2, 1, 3, 4]$. This is not a violation of $f$'s strategyproofness since in this ranking $3$ is indifferent between $f(\sigma)$ and $f(\sigma')$. However, this results in a different outcome as the rankings are combined: under $f(\sigma') = [2, 1, 3, 4]$ and $h(\sigma', b) = [2, 3, 1, 4]$ we would first select $2 \in f(\sigma)$, and then select $3 \in h(\sigma, b) \setminus\{2\} = [3, 1]$. By promoting an agent in one ranking, they can be removed from consideration in the other, breaking strategyproofness. Similar counter examples can be constructed in a probabilistic context as agents are incentivized to shift probability mass to higher ranked agents to improve their odds of selection.

To achieve Bayes-Nash incentive compatible (BNIC) combination we accept a measure of inexactness in the final selected set by allowing agents to be selected twice. This can be done deterministically, but a randomized approach is superior as it ensures any increase in prediction score increases selection probability. We treat the prediction scores as stakes in a joint lottery, inspired by \citet{matteiPS2019}, and draw $d$ participants. This gives us a complete mechanism in \Cref{alg:lottery-mechanism}.

\begin{algorithm}[t]
\small
\SetAlgoLined
\caption{Lottery Formalization}
\label{alg:lottery-mechanism}

\KwIn{Agents $\N$, $m$-regular review assignment $A$, approval function $\apr$, prediction function $\prd$, target selection size $d$, weighting exponent $\epsilon$}
\KwOut{Selected set $S \subset \N \bigcup \{\nullagent\}$}

$\mathcal{B} := []$ \tcp*{Initialize list to hold winners for each sublottery}
\For{$i \in \N$}{
 \tcp*[l]{Collect reviews of $i$}
 Actual := $< \apr[i][j] | j \in A^{-1}(i) >$ \;
 \tcp*[l]{Collect predictions}
 Predicted := $< \prd[i][j] | j \in A^{-1}(i) >$ \;
 \tcp*{Score using \Cref{alg:rbts}}
 $\text{Scores}_i$ := \texttt{RBTS}(i, Actual, Predicted) \;
 \For{$j \in A^{-1}(i)$}{
 $L_i(j) := \frac{\text{Scores}(j)^\epsilon}{2^\epsilon \cdot m}$\;
 }
 $L_i(\nullagent) := 1 - \sum_{j \in A^{-1}(i)} L_i(j)$\;
 $s_i \sim L_i$\;
 $\mathcal{B} \leftarrow s_i$
}

\tcp*[l]{Draw $d$ agents from $B$ without replacement to create winning set} $W \sim \binom{B}{d} \setminus \{\emptyset\}$\;
\Return{$W$}
\vspace{-4pt}
\end{algorithm}

\normalsize
\subsection{Conditions for Incentive Compatibility}


Due to space constraints we defer to \Cref{apx:IC_Proof} the proofs of the RBTS mechanism requirements for an admissible prior.

\begin{theorem}
\label{thm:bnic_proof}
If agent preferences are drawn from a distribution in $T$ and $m \geq 3$ then the RBTS lottery (\Cref{alg:lottery-mechanism}) is Bayes-Nash incentive compatible.
\end{theorem}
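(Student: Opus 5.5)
The plan is to reduce Bayes-Nash incentive compatibility of \Cref{alg:lottery-mechanism} to the Bayes-Nash incentive compatibility of RBTS \cite[Theorem 9]{robustBTS}. The reduction goes through monotonicity of selection probability in expected score. Fix an agent $j$ and assume every other agent reports truthful approvals $\apr$ and truthful beliefs $\prd$.

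\textbf{Step 1: localize $j$'s influence.} Agent $j$'s reports $(\apr[i][j],\prd[i][j])_{i\in A(j)}$ enter the lottery only through the sublotteries $L_i$ for $i \in A(j)$, since RBTS for proposal $i$ uses only the reports of the board $A^{-1}(i)$. They also change the scores of the other board members $k,l$ in those sublotteries. They do not affect $j$'s own proposal: $j\notin A(j)$, so the sublottery $L_j$ is independent of $j$'s reports. The same holds for the strategyproof part $f$ if it is run alongside; I would note that $f$ is strategyproof and so adds no incentive to deviate.

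\textbf{Step 2: write $j$'s selection probability.} Agent $j$ wins if some sublottery draws $j$ into $\mathcal{B}$ and $j$ survives the final draw of $d$ from $\mathcal{B}$. I would define $p_i = L_i(j) = \mathrm{Score}_i(j)^\epsilon/(2^\epsilon m)$ and argue that $P(j\in W)$ is nondecreasing in each $p_i$. The subtlety is that $j$ may appear in $\mathcal{B}$ several times, and that the non-null entries of $\mathcal{B}$ are random. I would handle this by coupling. Raising $p_i$ only moves mass from the other outcomes of $L_i$ to $j$, so entry $i$ becomes $j$ on a superset of events. Adding copies of $j$ to $\mathcal{B}$, or replacing another agent by $j$, cannot reduce the probability that some copy of $j$ is among the $d$ drawn.

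\textbf{Step 3: reduce to the score.} With $\epsilon=1$ the winning probability contributed by sublottery $i$ is linear in $\mathrm{Score}_i(j)$. I would then invoke RBTS incentive compatibility: under a prior in $T$ (the admissible priors of \Cref{apx:IC_Proof}) and $m\ge 3$, truthful $(\apr[i][j],\prd[i][j])$ maximizes $\E_j[\mathrm{Score}_i(j)]$ for each $i$ separately. Summing over $i\in A(j)$ gives truth-telling as a best response.

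\textbf{Main obstacle.} The difficulty is making Step 2 and Step 3 compatible, because $P(j\in W)$ is not simply a sum of the $p_i$. Collisions, empty draws, and the without-replacement draw make it a nonlinear, though monotone, function of the vector $(p_i)$. Maximizing each $\E[p_i]$ does not in general maximize $\E[g(p)]$ for a nonlinear $g$.

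I would try two routes. The first is to show that the sublotteries are drawn independently and that $j$'s reports on different $i$ affect only independent coordinates. Then $P(j\in W)=1-\E\big[\prod_i(1-p_i c)\big]$-type expressions are multilinear, hence affine in each $p_i$ with nonnegative coefficient. This reduces the problem coordinatewise to RBTS's expected-score maximization. This needs $\epsilon=1$, or else a condition on the prior class $T$ guaranteeing that RBTS truthfulness also maximizes $\E[\mathrm{Score}^\epsilon]$. The second route, if multilinearity fails because the final draw couples the coordinates, is to condition on the other entries of $\mathcal{B}$. For fixed other entries, $j$'s winning probability is affine in the indicator that entry $i$ equals $j$, which again gives linearity in $p_i$.
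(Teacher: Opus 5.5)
Your reduction is the same as the paper's. Selection probability increases with the number of $j$-slots in $\mathcal{B}$, the sublotteries are handled one at a time, $L_i(j)$ increases with $\mathrm{Scores}_i(j)$, and Theorem 9 of RBTS is then invoked.

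The obstacle you isolate is a genuine gap, and the paper's proof does not close it. It makes two unjustified assertions:
\begin{itemize}
\item It asserts that $j$'s selection probability is proportional to its number of slots $c_j$. This holds only for $d=1$. With $d$ uniform draws without replacement from the $n$ slots, the probability is $g(c_j)=1-\binom{n-c_j}{d}/\binom{n}{d}$, which is concave in $c_j$.
\item It equates maximizing $L_i(j)$ with maximizing the RBTS score. RBTS only guarantees that truth maximizes the \emph{expected} score.
\end{itemize}

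No condition on $T$ will repair the second point for $\epsilon\neq 1$. Write $S=R(w_j,x_l)+R(y,x_l)$ for $j$'s score in sublottery $i$, where $y=\prd[i][j]$, $x_l=\apr[i][l]$, and $R(y,1)=2y-y^2$, $R(y,0)=1-y^2$ is the quadratic rule of the paper's example. Let $\theta\in(0,1)$ be $j$'s posterior that $x_l=1$, i.e.\ the truthful prediction. Then
\[
\frac{\partial}{\partial y}\E\big[S^{\epsilon}\big]\Big|_{y=\theta}=2\epsilon\,\theta(1-\theta)\Big(\E\big[S^{\epsilon-1}\mid x_l=1\big]-\E\big[S^{\epsilon-1}\mid x_l=0\big]\Big).
\]
The bracket is $0$ when $\epsilon=1$. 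For $\epsilon\neq 1$ it vanishes only on a knife-edge; for deterministic $w_j$ it requires $w_j=1-\theta$. So for $\epsilon\neq 1$, including the exponents $2,4,8$ used in the simulations, truthful prediction is generically not a best response. The theorem can only be proved for $\epsilon=1$.

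For $\epsilon=1$, your route 1 is the right way to finish, but two assumptions must be made explicit; the paper makes them only tacitly.

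First, the $d$ draws must be uniform over all $n$ slots of $\mathcal{B}$, with $\nullagent$ draws discarded. If the $\nullagent$ slots are removed first, the effect you note in Step 1 matters. $j$'s reports enter the scores of the board members for whom $j$ is the reference or peer agent. $j$ then gains by depressing those scores, which creates more $\nullagent$ slots.

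Second, the RBTS instances for different $i\in A(j)$ must be independent from $j$'s viewpoint, meaning separate states and signals per proposal. Under both assumptions:
\begin{itemize}
\item the events $\{s_i=j\}$ are independent with probabilities $\E[\mathrm{Scores}_i(j)]/(2m)$;
\item $P(j\in W)=\E[g(c_j)]$, with $g$ increasing, is increasing in each of these means;
\item each mean depends only on $j$'s reports on $i$, so RBTS finishes the argument.
\end{itemize}

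Route 2 does not remove the second assumption. The coefficient $b(\mathcal{B}_{-i})$ of the event $\{s_i=j\}$ can be correlated with $\mathrm{Scores}_i(j)$ through a common state. In that case maximizing $\E[\mathrm{Scores}_i(j)]$ does not maximize $\E[b(\mathcal{B}_{-i})\,\mathrm{Scores}_i(j)]$. Without independence, $j$'s signals on other proposals also shift its posterior on $i$, which the single-question RBTS theorem does not cover. This last point is why independence is needed even when $d=1$.
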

 
 These conditions are met both where there is a common knowledge of how beliefs are formed by all agents, which is a very strong condition. But they are also met when all agents derive their preferences from the same source/process, including that agents have the same initial chance of developing any particular belief, and that any given belief has at least some chance of being developed \cite{regenwetter2006behavioral}.

\section{Theoretical Properties of \textsc{PeerBTS}}
In this section we give an intuition for \textsc{PeerBTS} guarantees.

\subsection{Agent Incentives}

From \Cref{thm:bnic_proof} agents have an incentive to maximize the score returned by \Cref{alg:rbts}. We know that there is an equilibrium where all agents review honestly, but it is useful to examine this more directly. Specifically, we are interested in any incentive that pushes an agent to report their reviews honestly. A detailed model for agent utility or reliability is beyond the scope of this paper, but even a simple model of agent incentives is useful for understanding the functionality of \textsc{PeerBTS}.

\subsubsection{Incentive Example}
To build intuition we present a simple example calculation of RBTS scores to demonstrate the incentives. We assume for the following that the honest rate of approval for agents is $.2$, representing an uncommon but not highly unlikely review. To simplify the calculations, we also assume that all agents other than one -- agent $j$ -- predict randomly, drawing their prediction uniformly from $[0, 1]$. This gives us consistent expected values for most of the score calculation, $\E(\prd[i][k], \E(\delta) = .5$ which gives us $\E(w_j) = \apr[i][j]$ by the definition of $w_j$.

We first compute agent $j$'s expected information score $\E(R_q(w_j, \apr[i][k])$ if they report randomly (for simplicity reporting $0$ or $1$ with equal probability). Splitting the expected score into these two cases we get $.5(\E(R_q(0, \apr[i][k])) + \E(R_q(0, \apr[i][k]))$. We also split the expected value for $\apr[k][i]$, giving us 
\small
$$.5(.2(R_q(0, 1) + R_q(0, 1)) + .8(R_q(0, 0) + R_q(1, 0))$$
\normalsize
with a final score of $.5(.2(0 + 1) + .8(1 + 0)) = .5.$
If, instead, $j$ reports honestly according to $p$ we would instead see a final expected score of
\small{$$.2^2(R_q(0, 1) + R_q(0, 1)) + .8^2(R_q(0, 0) + R_q(1, 0)) = .68$$}\normalsize
meaning that in this case $j$ has an incentive to report honestly. 

When considering the prediction score the incentive is even easier to see. If $j$ predicts randomly with $\E(\prd[i][j]) = .5$ then the expected prediction score $\E(R_q(\prd[i][j], \apr[i][k])$ does not even need to be broken up based on $\E(\apr[i][k])$, as $R_q(.5, 0) = R_q(.5, 1) = .75$. In all cases a random prediction has an expected value of $.75$, so any score higher than this is preferable for agent $j$. If $j$ predicts $\prd[i][j] = .2$, we have 
\small{$$\E(R_q(\prd[i][j], \apr[i][k]) = .2R_q(.2, 1) + .8R_q(.2, 0) = .84$$}\normalsize
 so in this case we also see a clear incentive for agent $j$ to improve their predictions.

\subsubsection{Formal Analysis}

As shown in \Cref{thm:bnic_proof} under ideal conditions (in the honest equilibrium) the RBTS lottery incentivizes honest reporting and accurate predictions. Our example calculations show that these incentives can exist in at least one case when other agents are not playing the equilibrium strategy but the question remains: what are the limits of the incentives for honesty and accuracy when using the RBTS lottery? Assuming the agent population reports according to some distribution $p^*$ (even if this is a composite distribution) each agent has the following general incentives:

\paragraph{Reports.}
Agents are incentivized to report according to $p^*$. This can be shown by careful application of a result from the original description of RBTS \cite[Lemma 8]{robustBTS}, but this is not necessary. Instead note that if all other agents are reporting according to $p^*$, then their behavior is equivalent to that of honest agents in world where the true report distribution $p = p^*$. We know that honest agents form an equilibrium around truthfully reporting according to $p$, the same is true for dishonest agents all reporting $p^*$, so every agent has an incentive to report according to $p^*$.

\paragraph{Predictions.}
Agents are incentivized to accurately predict $p^*$, a well known property of quadratic scoring rules \cite{selten1998axiomatic} and a result proved directly by the creators of the RBTS \cite[Lemma 7]{robustBTS}. Hence, the incentives in the RBTS (and therefore our lottery) are fundamentally incentives to follow the pack. In the case of honest agents reporting from a shared prior over observations, this incentivizes honesty. In other cases it incentivizes understanding and following the behavior of the broader agent population.

While obviously an incentive towards honesty is ideal, the pack-following nature of the RBTS can still be useful. Recall that the primary goal of \textsc{PeerBTS} is to provide agents an incentive to expend some effort in the reviewing process. While we do not model in detail the application of effort -- as any such model would be highly ad-hoc -- it is clear that an agent $j$ acting with no effort whatsoever will need to either report a constant approval value ($0$ or $1$) or consistently randomize their report. Since all agents have an incentive to follow the reporting of other agents as long as $p_j \neq p^*$ agent $j$ has an incentive to expend effort to move closer to $p^*$ -- so even in cases where \textsc{PeerBTS} fails to be BNIC it still incentivizes effort. This is equivalent to a Keynesian guessing game \cite{keynes1936longterm}, and should exhibit similar dynamics.

\section{Empirical Evaluation of \textsc{PeerBTS}} \label{sec:empirical}

Following the experimental methods of previous work on strategyproof peer selection \cite{matteiPS2019} we evaluate \textsc{PeerBTS} in a series of simulations.\footnote{All experimental code and data -- including \textsc{PeerBTS} implementation -- will be published on GitHub in the event of publication.}

\subsection{Experimental Setup} 

\begin{figure}[t]
 \centering
 \includegraphics[width=0.6\linewidth]{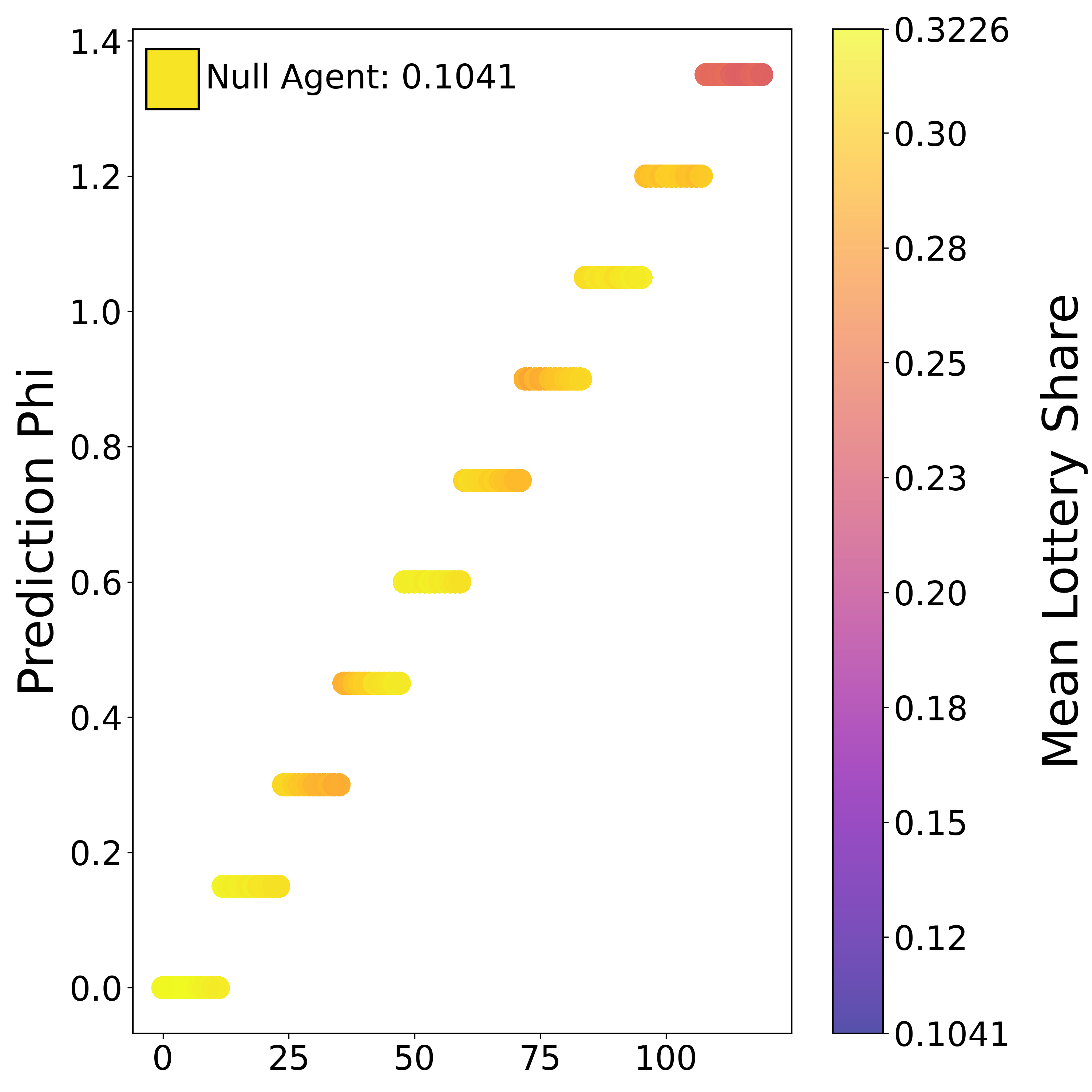}
 \caption{Mean lottery share for agent populations with Deciles treatment in simulation with $m = 3, e = 8.0$, and $\varphi = 0.5$.}
 \label{fig:lottery_share_graph}
 \Description{Graph representing the average lottery share received by agents separated by prediction accuracy.}
 \vspace{-10pt}
\end{figure}

We extend the framework of \citet{lev2023peernomination} and only briefly describe the pre-existing methods here; a more complete specification of certain details can be found in that paper. Complete parameters used can be found in \Cref{apx:params}.

\subsubsection{Noisy Profile Generation}

Following the Condorcet perspective on social choice we assume that there exists an underlying true ranking of agent quality. For simplicity we assume that the true ranking of an agent is identical to their number, so that agent $1$ is the highest quality agent, agent $2$ the second, and so on. To model potential inaccuracies in agent reviews we construct our profiles from noisy observations of this ranking according to a Mallows model \cite{mallows1957non}. We sample permutations of a base ranking $\phi$ based on a dispersion parameter $\varphi \in [0,1]$, creating a distribution where the probability of an alternative ranking $\phi'$ is $\pi_{R, \varphi} \propto \varphi^{KT(\phi, \phi')}$, where $KT(R, R')$ is the Kendall-Tau distance between $\phi$ to $\phi'$ \cite{kendall1938new}. 


\begin{figure*}[t]
 \centering
 \includegraphics[width=0.6\textwidth]{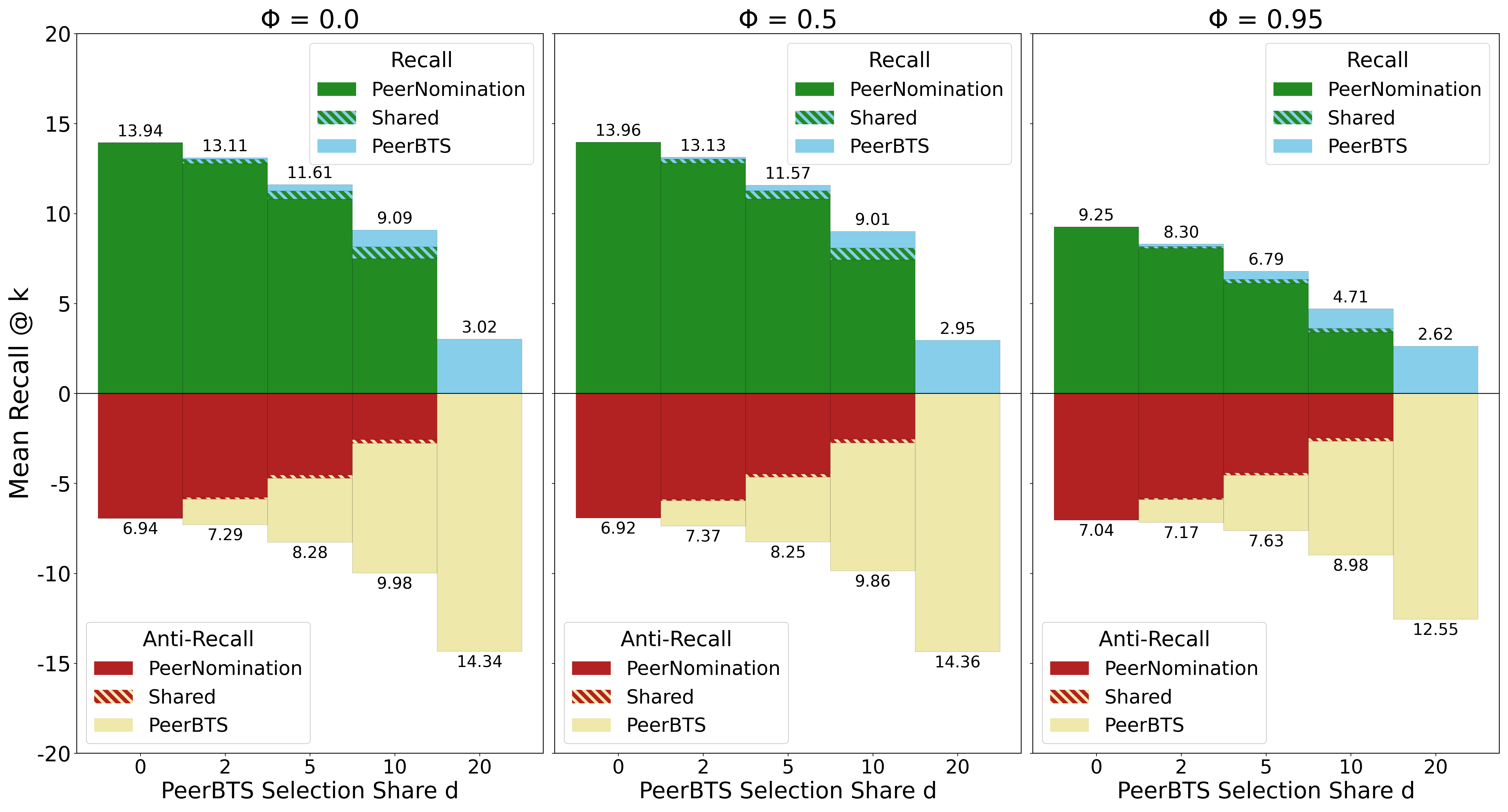}
 \caption{Mean recall and anti-recall of ground truth for combined \textsc{PeerNomination} and \textsc{PeerBTS} with Decile prediction model.}
 \label{fig:recall_graph}
 \Description{Graph presenting recall and anti-recall of PeerBTS at different levels of hybridization and overall noise presence.}
 \vspace{-10pt}
\end{figure*}


Also note that as a continuous, shared distribution the Mallows model satisfies our requirements for an admissible prior as shown in \Cref{lem:distribution} and \Cref{lem:state_priors}. For this reason we do not simulate strategic behavior among agents, though this would be an interesting direction for future research.

\subsubsection{Prediction Generation}

\textsc{PeerBTS} solicits predictions from the entire agent population, and as such the generation of agent predictions is a key part of our simulation. Following our profile sampling methodology, we treat agent predictions as (potentially) noisy observations of a true profile $\sigma$ which represents the preferences of all agents. Again mirroring our profile sampling methodology we use a Mallow's model to sample agent observations, meaning that in our simulation an individual agent $i$'s prediction takes the form of a predicted profile $\sigma^*_i$, from which we can easily compute the belief profile $\beta_i$. Note that this is not required by our formal statement of belief profiles in \textsc{PeerBTS}, but it is a useful simplification for empirical simulation. Under this model we define four simple agent population models. \textbf{Clairvoyant} agents with perfect predictive accuracy; \textbf{Random} agents who predict random values; \textbf{Divided} where the top $50\%$ of agents are Clairvoyant and the bottom $50\%$ Random; and \textbf{Deciles} where agents are divided in deciles with decreasing predictive accuracy. Additional details regarding these models can be found in \Cref{apx:prediction_model}.

\subsection{Evaluation Metrics}

Previous work on peer selection has mostly evaluated mechanism performance via the proportion agents selected from an assumed ground truth ranking, i.e., recall at $k$ \cite{boehmer2024guide}. This is a useful metric, but is insufficient to interpret the impact of different prediction models. To fill this gap we define the mean lottery share for agent $i$ as the average probability share $i$ was awarded across all predictions in their review assignment, formally $\bar{\ell_i} = \nicefrac{1}{m} \sum_{j \in A(i)} L_j(i)$. This allows us to quantify an agents overall performance in the RBTS lottery and check that the mechanism is incentivizing effortful predictions by rewarding skilled predictors.

\subsection{Results}

\paragraph{Mean Lottery Share.} Examining the measure of mean lottery share, i.e., amount of reward for predicting well, across all experimental values shows a small but positive impact when agents are better predictors. In all cases the mean lottery share consistently discriminates between agents based on prediction quality, however the absolute differences are small. An example of this dynamic in the Deciles population model can be seen in \Cref{fig:lottery_share_graph} with additional results in \Cref{apx:full_lottery}. Selecting for parameter values which produce close to the maximal difference the gap in lottery share between highly accurate predictors and random or low skill predictors is $\approx 0.2$, meaning about a $20\%$ increase in reward. While this is not as strong as we might like it does demonstrate the proven theoretical incentive for agents to improve their predictions.

\paragraph{Mechanism Recall}

The recall at $k$ of \textsc{PeerBTS} is not a primary concern of our analysis. However, it is still useful to understand how \textsc{PeerBTS} performs, especially as the share of selections assigned to the RBTS lottery (reward) increases. This analysis is heavily dependent on the assumption that higher quality agents are superior predictors. The impact on recall caused by selecting from the RBTS lottery is entirely determined by the relationship between agent quality and predictive effort/skill. For this reason we only examine recall under the Deciles prediction model, as this model most directly connects agent rank and effort.

As can be seen in \Cref{fig:recall_graph}, the overall mechanism performance declines as additional selections are allocated to to the RBTS (reward) lottery. In particular, the prediction lottery selects more agents outside of the true $\{1, ..., k\}$, leading to lower recall as $d$ increases. This is consistent with the earlier measurements of lottery share under the Decile model (\Cref{fig:lottery_share_graph}), as the relatively low differences in selection probability between agents means that \textsc{PeerBTS} has a limited ability to select agents in the recall set if the link between effort and quality is not sufficiently strong.  We additionally tested a deterministic variant of \textsc{PeerBTS} based on \textsc{Partition}\cite{DBLP:conf/tark/AlonFPT11} which removes agents with a below average RBTS score. We found this mechanism to have inferior recall and lottery share performance, full details in \Cref{apx:det_baseline}.

One advantage of \textsc{PeerBTS} is that it's performance does not greatly degrade as the overall noise $\Phi$ increases. In contrast, \textsc{PeerNomination} (and by extension all other traditional peer selection mechanisms) struggles as agent reviews diverge more dramatically from the ground truth. Despite this advantage, our results suggest that in most cases the prediction lottery share $d$ should be kept relatively small relative to $k$. That is, a few slots should be reserved for the reward, but most selections should be made based on quality.

\subsection{Results Discussion}

Our simulations reveal distinct practical limitations in \textsc{PeerBTS}, especially in regards to consistently rewarding/selecting skilled predictors. Even large differences in agent predictive accuracy, such as those in the Divided and Decile models, had very small differences in the mean lottery share given to the best and worst predictors at $.1041$ and $.3226$ respectively. While this has no bearing on the theoretical properties of \textsc{PeerBTS}, it suggests that any actual implementation of the hybrid mechanisms we propose should be done cautiously and with the knowledge that any agents selected through the RBTS lottery are not necessarily of particularly high quality in the ground truth. The most obvious step to take to mitigate this concern is to limit the number of selections made using the prediction lottery, as only one selection is technically required to create our desired incentive for high quality predictions.

\subsection{Real World Data}

As part of the review process for a smaller (53 submissions, 60 reviewers) workshop we ran a pilot study to examine the quality of reviewer predictions.  
Reviewers were asked to predict the average final review score that papers would receive (6 per paper), and these predictions along with the final review scores were collected.\footnote{We were given access to this data under CC BY 4.0 and present only a limited analysis of this data as it pertains to \textsc{PeerBTS}, a full report and mass release of the data is forthcoming from the workshop organizers.} 
During the bidding process reviewers were asked to produce a brief micro-review of $6$ randomly assigned papers, focused on an initial assessment of fit for the workshop and a prediction of final scores for the full reviews. We found that even in this limited setting, reviewers were fairly accurate predictors, with a Mean Absolute Difference of $.69$ on a $5$-point scale between predictions and aggregate final review scores with $89.5\%$ of reviewers having an average MAD less than $1$. We also saw meaningful variation between individual reviewer's prediction abilities (SDev $.43$) which can be used to identify exceptional predictors worthy of a reward - for instance $33\%$ of reviewers had an average MAD below $.25$ and would be an excellent group for any mechanism to reward. 
This limited experiment suggest that two of the foundational assumptions underlying \textsc{PeerBTS} are feasible: that reviewers can, with effort, make accurate predictions and that meaningful variation between reviewer's predictive accuracy exist for mechanism designers to reward and punish.

\section{Conclusions and Future Work}
We have presented a novel peer selection mechanism -- \textsc{PeerBTS} -- that both incentivizes agents to report their true beliefs, as well as incentivizing them to put in effort to ``discover'' their predictions and beliefs. We show that to do so, the mechanism designer must require agents to report more than just agent beliefs on their peers, and we suggest a prediction algorithm based on RBTS \cite{robustBTS} for which the additional information is an estimate of others' beliefs, side-by-side by one's own review. This prediction algorithm can be combined with \emph{any} other peer-selection mechanism, resulting in a reward structure, such that the combined mechanism keeps the truth-incentivizing properties of its component mechanisms, while rewarding agents which have put in an effort.

We believe we are the first to suggest such a combination of both truth and effort incentives, and we hope it will result in further initiatives in this domain. Our mechanism in merely incentive compatible, but we hope that a strategyproof mechanism is out there as well
. As our empirical evaluations show, there is plenty more to do: first and foremost, trying to find an exact variant (or show it is impossible), and secondly, constructing reward mechanisms that more sharply differentiate between better and worse reviewers. Additionally, it is fascinating to see what can be done with further information requests beyond beliefs, and beliefs about others' beliefs.


\begin{acks}
Mattei was supported in part by NSF Awards, IIS-RI-2134857, IIS-RI-2339880 and CNS-SCC-2427237 as well as the Harold L. and Heather E. Jurist Center of Excellence for Artificial Intelligence at Tulane University and the Tulane University Center of Excellence for Community-Engaged Artificial Intelligence (CEAI). Portions of this research were conducted with high performance computational resources provided by the Louisiana Optical Network Infrastructure (LONI) (http://www.loni.org).
\end{acks}


\bibliographystyle{ACM-Reference-Format} 
\bibliography{peerbib}


\clearpage
\appendix

\section{Appendix}

\subsection{Proofs for Conditions for Incentive Compatability}\label{apx:IC_Proof}

For the lottery mechanism described in \Cref{alg:lottery-mechanism} to inherit the strict Bayes-Nash incentive compatibility of the RBTS, RBTS' conditions need to be met: All agents involved need to have a shared admissible prior and there are at least $3$ agents involved in any given round of the lottery, which is equivalent to requiring $m\geq 3$. Of these, the admissible prior is the only difficulty, as we have not previously defined the beliefs of agents involved in the peer selection in probabilistic terms. 

The requirements for an admissible prior are detailed in \citet{robustBTS}, Definition 1, but for our purposes we need to define a set of states $T$ with $|T| \geq 2$ ($T$ will be, for us, the state of nature -- any specific instantiation of agents' beliefs) and common priors $p_{j,t} = P(\apr[j][i] = 1 | t)$ such that:

\begin{enumerate}
 \item$\forall j, k \in \N, t \in T:\, p_{j,t} = p_{k,t}$.
 \item $\forall j \in \N, t \in T:\, 0 < p_{j,t} < 1$.
 \item $\forall j \in \N\, \exists t_1, t_2 \in T:\, p_{j, t_1} \neq p_{j, t_2}$
\end{enumerate}

We will show that assuming agent beliefs are sampled from the same distribution fulfill these requirements. Such a distribution can be a noisy variant of a ground truth, such as a Mallows' distribution \cite{mallows1957non} (the common theoretical assumption made by peer-selection papers), or any other relevant distribution to the particular situation.

To prove \Cref{thm:bnic_proof}, we first need several lemmata.
\begin{lemma}
\label{lem:distribution}
If all agent beliefs (i.e., their preferences over others) are drawn from shared distribution $D \in \Delta(\N^{\N})$ (i.e., $\forall j \in \N: \,\sigma_j \sim D$), then there exists a shared prior $\forall i, j \in \N, d \in D(\N^{\N}):\, P(\apr[j][i] = 1 | D)$.
\end{lemma}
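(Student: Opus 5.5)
The plan is to write the approval indicator as a deterministic function of the reviewer's ranking and then push the shared distribution $D$ forward through that function. Fix a proposal $i$ and a reviewer $j$ with $i \in A(j)$. By definition, $\apr[i][j] = 1$ exactly when $\sigma_j(i) \le q$. Here $\sigma_j(i)$ is the rank that $j$ gives $i$ within $j$'s review assignment. So $\apr[i][j] = g_{i,A(j)}(\sigma_j)$, where $g_{i,S}(\sigma) = \mathbb{1}[\text{rank of } i \text{ in } \sigma|_S \le q]$ is a measurable map from $\N^{\N}$ to $\{0,1\}$. When $q$ is not an integer, a fixed randomization is added at the boundary rank, which does not depend on $j$. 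The probability $P(\apr[i][j]=1 \mid D)$ is then the pushforward mass $D(\{\sigma : g_{i,A(j)}(\sigma)=1\})$, which is well defined as a number in $[0,1]$.

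The next step is to show that this number does not depend on the reviewer $j$. Every $\sigma_j$ is drawn from the same $D$, so the quantity can depend on $j$ only through the set $A(j)$, never through $j$'s identity. I will read the statement's ``shared prior'' in this sense. Conditional on $D$ and on the assignment, every reviewer of $i$ faces the same law for its approval bit, namely $p_i := \Pr_{\sigma\sim D}[g_{i,A(j)}(\sigma)=1]$.

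The main obstacle is that reviewers $j$ and $k$ of $i$ can have different assignment sets $A(j)\neq A(k)$. Then $g_{i,A(j)}$ and $g_{i,A(k)}$ differ as functions, and their probabilities under $D$ need not coincide. I would handle this by treating the assignment as part of the randomness: the assignment $A$ is drawn independently of $\sigma$, for instance as a uniformly random $m$-regular assignment, or as one symmetric under relabellings that fix $i$. Averaging over $A$ then gives $P(\apr[i][j]=1\mid D)=\E_A\bigl[\Pr_{\sigma\sim D}[g_{i,A(j)}(\sigma)=1]\bigr]$. By the symmetry of the assignment distribution, this average is identical for all $j\in A^{-1}(i)$. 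If an exchangeability assumption on $A$ is not wanted, the fallback is to condition on a fixed $A$ and assume that $D$ is invariant under permutations of agents other than $i$. Under that assumption the rank of $i$ within any $m$-subset containing $i$ has the same distribution, and so $p_i$ is the same for every such subset.

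Finally, I will record the shared prior formally as the family $\{p_{i}(D)\}_{i\in\N}$, indexed by the state $D$. This is the object that the later lemma (\Cref{lem:state_priors}) will check against conditions 1--3 of admissibility. Only existence and independence from $j$ are needed here. Positivity and variation across states are left to that lemma.
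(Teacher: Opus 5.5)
Your first paragraph is essentially the paper's whole proof. The paper notes that $\apr[i][j]=1$ iff $\sigma_j(i)\le q$, so the approval bit is determined by $\sigma_j$ alone. Since every $\sigma_j\sim D$, its law is the pushforward of $D$. The paper reads $\sigma_j(i)$ as the rank of $i$ in $j$'s own ranking $\sigma_j$, so the map $\sigma\mapsto\mathbf{1}[\sigma(i)\le q]$ is literally the same for every reviewer. Hence $P(\apr[i][j]=1\mid D)=D(\{\sigma:\sigma(i)\le q\})$ is independent of $j$, with no assumption on $A$. You instead let the map depend on $A(j)$. That is a legitimate worry if quotas are applied within each review bundle, as in \textsc{PeerNomination}, and the paper does not address it. But the repair you propose for it does not go through.

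The failing step is ``by the symmetry of the assignment distribution, this average is identical for all $j\in A^{-1}(i)$.'' The average has to be taken conditional on $i\in A(j)$. Under a uniformly random $m$-regular assignment, the other $m-1$ members of $A(j)$ then form a uniform $(m-1)$-subset of $\N\setminus\{i,j\}$. So the pool $i$ competes against depends on $j$, because nobody reviews themselves. Relabelling-symmetry of the law of $A$ cannot cancel this, because $D$ is not relabelling-invariant. Take a concentrated Mallows model around $1\succ 2\succ\cdots\succ n$ with $i=2$ and $q=1$. Reviewer $1$ approves $2$ with probability near $1$, since agent $1$ never appears in $A(1)$. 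Reviewer $3$ approves with probability near $1-\frac{m-1}{n-2}$.

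Your fallback does give $j$-independence for a fixed $i$, but it is too strong. The lemma is claimed for every $i$, so you need $D$ invariant under the stabilizer of every point. For $n\ge 3$ these stabilizers generate the full symmetric group, so $D$ must be exchangeable. Then the rank of $i$ within any fixed $m$-set is uniform on $\{1,\ldots,m\}$, and $P(\apr[i][j]=1\mid D)=q/m$ for every state $D$. A prior exists, but it is identical across all states. Condition 3 of admissibility can then never hold, \Cref{lem:state_priors} becomes vacuous, and the Mallows model the paper relies on is excluded.

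So either adopt the paper's reading, under which your first paragraph already finishes the proof, or accept that within-bundle ranks require a different hypothesis than either of the two you propose.
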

\begin{proof}
 Recall that $\apr[i][j] = \begin{cases}
 1 & \text{if } \sigma_j(i) \leq q\\
 0 & \text{otherwise.} 
 \end{cases}$. Thus, it is fully determined by $\sigma_j$ -- $\apr[i][j]$ inherits the distribution from $D$ and we have a shared prior over $D$.
\end{proof}

As this result suggests, we will define states $T$ as a set of distributions over beliefs. To satisfy the requirement that the prior be fully mixed we take $T \subseteq \Delta^+(\N^\N)$ ($\Delta^{+}(X)$ being items with full support over the simplex of elements of $X$).

\begin{lemma}
\label{lem:state_priors}
If $T \subseteq \Delta^+(\N^\N)$ and $t_1, t_2 \in T$ exist such that $\forall j \in \N:\, p_{j, t_1} \neq p_{j, t_2}$, then any distribution over $T$ that supports $t_1, t_2$ forms an admissible prior. 
\end{lemma}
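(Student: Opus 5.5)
We verify the three admissibility conditions for any prior over $T$ that places positive mass on $t_1$ and $t_2$, treating each state $t\in T$ as a distribution over belief profiles in which every agent's $\sigma_j$ is drawn from $t$.

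\emph{Condition (1), symmetry.} Fix a state $t \in T$. By \Cref{lem:distribution}, since every $\sigma_j$ is drawn from the same distribution $t$, the approval indicator $\apr[i][j]$ is a deterministic function of $\sigma_j$ (through the threshold $\sigma_j(i)\le q$). Its conditional law given $t$ is therefore the pushforward of $t$ under this map, and the map does not depend on the identity of $j$. Hence $p_{j,t} = P(\apr[i][j]=1 \mid t) = p_{k,t}$ for all $j,k$. We also note that conditional on $t$ the draws are i.i.d., which matches RBTS's requirement that signals be conditionally independent given the state.

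\emph{Condition (2), full mixing.} Because $t \in \Delta^+(\N^\N)$ has full support, it puts positive mass on some ranking with $\sigma_j(i)\le q$ (since $q>0$, there is a ranking placing $i$ at the top) and on some ranking with $\sigma_j(i)>q$ (since $q<m$, there is a ranking placing $i$ near the bottom). This gives $0<p_{j,t}<1$.

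\emph{Condition (3), distinct states.} This is supplied directly by the hypothesis: $p_{j,t_1}\neq p_{j,t_2}$ for every $j$. If the prior over $T$ assigns positive probability to both $t_1$ and $t_2$, the agent's posterior after observing its own signal genuinely depends on that signal. This is exactly what RBTS uses, via its Definition 1 of an admissible prior, to make the shifted prediction $w_j$ strictly informative. We also need $|T|\ge 2$, which holds because $t_1\neq t_2$.

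The main obstacle is making condition (2) rigorous. $\Delta^+(\N^\N)$ is stated over all functions rather than rankings, and $q$ may be non-integer with probabilistic approval. We handle this by requiring that the support contain at least one $\sigma$ placing $i$ strictly within the quota and one placing it strictly outside. Under the fractional-approval variant, the boundary rank contributes probability strictly between $0$ and $1$, which can only help keep $p_{j,t}$ in the interior. With all three conditions verified, the prior meets RBTS's Definition 1 and is therefore admissible.
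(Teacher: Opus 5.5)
Your proof is correct and takes essentially the paper's route: condition (1) comes from \Cref{lem:distribution}, since approval is the same deterministic function of each draw $\sigma_j \sim t$; condition (2) comes from the full support of $t \in \Delta^+(\N^\N)$; and condition (3) is taken directly from the hypothesis. Your version is more careful than the paper's, which argues (2) loosely via only $k$ agents being selected rather than via the quota $q$, and which leaves implicit the conditional independence of signals given the state, $|T|\ge 2$, and the fractional-approval case that you address.
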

\begin{proof}
\Cref{lem:distribution} shows that any shared distribution over preferences forms a shared prior for agent approval, satisfying condition $1$. We can see that condition $2$ since elements in $T$ have a positive probability for any element in $\N^\N$ (yet only $k$ elements are selected from $\N$), each element $i\in\N$ has a positive probability of not being selected and of being selected, i.e., $ 0 < p_{j,t} < 1$, satisfying condition $2$. Finally, condition $3$ is satisfied as it is explicitly assumed by the lemma.
\end{proof}
 
 We can now finish proving \Cref{thm:bnic_proof}:
 \begin{proof}[Proof of \Cref{thm:bnic_proof}]
      \Cref{lem:distribution,lem:state_priors} show that our use of the RBTS algorithm, maintains its Bayes-Nash incentive compatible property. So what is left is to show that our combination of the RBTS with the regular mechanism (\Cref{alg:lottery-mechanism}) does so as well. The probability of selecting agent $j$ in the final lottery over $\mathcal{B}$ is proportional to the number of $j$ entries in $\mathcal{B}$.
 Therefore, $j$ maximizes their chance of being selected by maximizing the number of entries they receive in $\bigbag$, i.e., by maximizing the number of expected wins of lotteries $L_i$ where $j \in A^{-1}(i)$. As each of these lotteries are independent, we can consider $j$'s goal as maximizing $L_i(j)$ for some arbitrary $i$. Since $L_i(j) = \frac{\text{Scores}_i(j)^\epsilon}{2^\epsilon \cdot m}$, and both $\epsilon$ and $m$ are set in advance, this means agent $j$ seeks to maximize their entry in $\text{Scores}_i$, i.e., their RBTS score.

 We know that for agent populations $\geq 3$ and under admissible priors that the RBTS scoring mechanism is Bayes-Nash incentive compatible \cite[Theorem 9]{robustBTS}, and therefore that our RBTS lottery mechanism is as well.
 \end{proof}

\begin{corollary}\label{cor:IC}
\textsc{PeerBTS} incentivizes agents to report $\sigma_{i}^{*}$.
\end{corollary}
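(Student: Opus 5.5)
Corollary~\ref{cor:IC} follows by combining \Cref{thm:bnic_proof} with the incentive property of the selection component $f$. The reward for effort comes from the lottery part. The plan is to decompose agent $i$'s selection probability under \textsc{PeerBTS} into the contribution from $S_f$ and the contribution from $S_h$, and then argue that each part is (weakly) maximized at the truthful report $\sigma_i^*$.

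\textbf{Step 1 (the selection part).} Fix the reports of all other agents at their truthful values. By assumption $f$ (here \textsc{PeerNomination}) is strategyproof, so $P(i\in S_f)$ does not depend on $\sigma_i$ at all. This is exactly the separation used in the proof of the first theorem. Thus agent $i$'s report cannot change the $S_f$ component, and $i$ is indifferent across reports as far as $S_f$ is concerned.

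\textbf{Step 2 (the lottery part).} Because $S_h$ is drawn by the independent lottery of \Cref{alg:lottery-mechanism}, $P(i\in S_h)$ is increasing in each $L_j(i)$ for $j\in A(i)$. By \Cref{thm:bnic_proof} (with $m\geq 3$ and an admissible prior as in \Cref{lem:distribution,lem:state_priors}), truthful reporting maximizes $i$'s expected RBTS score in Bayes-Nash equilibrium. Truthful reporting here means the approval $\alpha$ derived from $\sigma_i^*$ together with accurate predictions $b$.

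\textbf{Step 3 (combining the parts).} I would write
\[
P(i\in S_f\cup S_h)=P(i\in S_f)+P(i\in S_h)-P(i\in S_f\cap S_h).
\]
The first term is constant in $i$'s report, and the second is maximized by truthful reporting. This step is the main obstacle: $S_f$ and $S_h$ are not independent, since both depend on $\sigma$, so the intersection term could in principle reward misreporting. What I would try first is to note that, conditional on the profile, the lottery draw uses independent randomness. The combined expression is then
\[
P(i\in S_f)+\E\big[\mathbf{1}[i\notin S_f]\cdot P(i\in S_h\mid \sigma,b)\big].
\]
Since $\mathbf{1}[i\notin S_f]$ is unaffected by $i$'s own report (by strategyproofness of $f$, or by exclusion of self-review, as $i\notin A(i)$), it factors out. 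What remains is monotone in $i$'s RBTS score.

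\textbf{Step 4 (closing the argument).} Truthful reporting is therefore a best response. Any report obtained without effort (constant or randomized $\alpha$, or uninformed $b$) is strictly worse in expectation, because the BNIC of RBTS is strict. Since $\alpha_i$ is a function of $\sigma_i$, the agent is therefore incentivized to exert effort and report $\sigma_i^*$. I would follow the paper's convention that effort yields $\sigma_i^*$, and that producing accurate predictions $b$ requires the same effort.
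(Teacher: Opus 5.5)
Your Steps 1, 2 and 4 are the paper's proof. It argues that $i$'s chance of entering $S_f$ is untouched by $\sigma_i$ (strategyproofness of $f$), that $i$'s lottery score is not improved by deviating (\Cref{thm:bnic_proof}), and that good predictions $\prd[i]$ require the same effort that yields $\sigma_i^*$ (the remark after \Cref{2bitsCor}). The paper argues component by component and never mentions the overlap $S_f\cap S_h$. You are right to flag it, so Step 3 is where you go beyond the paper, and as written that step fails.

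The flaw is the claim that $\mathbf{1}[i\notin S_f]$ ``factors out.'' It is independent of $i$'s \emph{report}, but it is not a constant. It is a function of the reviews of $i$ by $A^{-1}(i)$ (and of $f$'s coins), and these are correlated with what determines $P(i\in S_h\mid\sigma,b)$. There are several sources of correlation:
\begin{itemize}
\item \textsc{PeerNomination} reviewers approve a fixed quota of their assignment. If some $r$ reviews both $i$ and an $\ell\in A(i)$, then $r$ not approving $i$ makes $r$ more likely to approve $\ell$, and $r$ may be $i$'s peer agent in the RBTS for $\ell$.
\item The common state $t$ correlates them further.
\item $\bigbag$ contains the draw from $L_i$, which is driven by the very reviews that decide whether $i\in S_f$.
\end{itemize}
So your second term equals $P(i\notin S_f)\cdot\E[P(i\in S_h\mid\sigma,b)\mid i\notin S_f]$, whereas \Cref{thm:bnic_proof} controls only the unconditioned expectation.

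Monotonicity in the realized score does not rescue this. RBTS is truthful only in expectation under $i$'s posterior given its own signal, not ex post. Under the posterior conditioned on $i\notin S_f$, the quadratic rule pushes $i$'s prediction for $\ell$ toward the peer's approval probability conditioned on that event, not toward the truthful $\prd[\ell][i]$. The shadowing comparisons that make truthful approvals optimal can also flip.

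To close the argument you need one of two things. Either add an explicit assumption that, given $i$'s information, the event $i\in S_f$ is independent of everything entering $i$'s lottery odds. Or adopt the separable accounting that the paper's component-wise argument tacitly assumes: the agent's objective is $P(i\in S_f)+P(i\in S_h)$, with double selection counted twice. Under that accounting Steps 1 and 2 already finish the proof and Step 3 is unnecessary. Note that the paper's own proof does not establish the union version either; it simply does not raise the issue.
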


\begin{proof}[Proof of \Cref{cor:IC}]
Agent $i$'s chance of being selected in $S_{f}$ is not influenced by its $\sigma_{i}$, thanks to $f$'s strategyproofness. Agent $i$'s score in $H$ is also not improved by changing $\sigma_{i}^{*}$. So no agent would intentionally report $\sigma'_{i}\neq\sigma_{i}^{*}$, when $\sigma_{i}^{*}$ is known

Reporting a good $\prd[i]$ is strictly beneficial for the score in $H$. Thus, producing a good estimate of others, requires putting in the effort required to produce a good $\sigma_{i}^{*}$ (as we noted in the comment following \Cref{2bitsCor}).
\end{proof}

\subsection{Simulation Parameters} \label{apx:params}

Simulations were run with constant $n = 120, k = 20$ and varying $m \in \{3, 4, 6, 9\}, d \in \{2, 5, 10\}, e \in \{1, 2, 4, 8\},\varphi \in \{0.0, 0.2, 0.5\}$ (review board size, RBTS lottery selection size, lottery exponentiation term, and dispersion parameter for agent profiles respectively). 

\subsection{Agent Prediction Model Details} \label{apx:prediction_model}
In general, we can define an agent's predicted profile with the single dispersion parameter $\varphi^*_i$ as $$\sigma^*_i(j) = \begin{cases}
 \sigma_i & j = i \\
 \sigma^*_i \sim M_{\varphi^*_i}(\sigma_j) & j \neq i
\end{cases} \forall j \in \N$$
Where $M_{\varphi^*_i}(\sigma_j)$ is the Mallows distribution on permutations of $\sigma_j$ under the dispersion parameter $\varphi^*_i \in [0,1]$ as discussed previously. This is not a complete model of all ways agents could make predictions, but it is a flexible framework for studying the impact of prediction quality on agent outcomes under \textsc{PeerBTS}. While any set of $\N$ $\varphi^*$ values are sufficient to define the prediction skills of an agent population, we restricted our simulations to four simplified population models,

\begin{itemize}
 \item Clairvoyant: $\forall i \in \N: \varphi^*_i = 0.0 $, representing a population of agents with perfect predictive accuracy.
 \item Random: $\forall i \in \N: \varphi^*_i = 1.0 $, representing a population of agents who make predictions uniformly at random.
 \item Divided: $\varphi^*_i = \begin{cases}
 0.0 & i < \frac{|N|}{2} \\ 1.0 & i \geq \frac{|\N|}{2}
 \end{cases}$, representing an agent population where the top half of agents $1, .., \frac{|N|}{2}$ are perfect predictors and the bottom half of agents are random predictors.
 \item Deciles: $\varphi^*_i = D(i) \in \{0.1, 0.2, \ldots, 1.0\}$, representing a division of the agent population into discrete deciles where agent ranking determines predictive accuracy.
\end{itemize}

The two uniform population models (Clairvoyant and Random) are proof of concept population concepts intended to create scenarios where there is no connection between agent quality and prediction ability, while the two mixed population models reflect an assumption that high quality agents will be high quality predictors. We do not claim either that these are the only potentially interesting population models, or that they reflect a realistic population of agents, but they allow us to perform a basic performance evaluation and ensure that \textsc{PeerBTS} functions well in stylized context where underlying agent quality is directly correlated with their predictive ability.

\subsection{Expanded Lottery Share Figure} \label{apx:full_lottery}

This is an extended version of \Cref{fig:lottery_share_graph} which includes all tested agent prediction models.

\begin{figure*}[ht]
 \centering
 \includegraphics[width=0.95\textwidth]{fig/lottery_graphs/3_8_0.5.png}
 \caption{Lottery share for all population models at $m = 3$, $e = 8.0$, and $\varphi = 0.5$.}
 \label{fig:full_lottery}
 \Description{Complete lottery share graphic over all tested agent prediction models.}
\end{figure*}

\subsection{Deterministic Baseline}\label{apx:det_baseline}

We ran a deterministic version of \textsc{PeerBTS} as a baseline test. This does require significant modification to the selection method as we can no longer ensure incentive compatibility using a weighted lottery. Instead we establish some threshold $t$ and compute the final selection as $W/{j \in \N | \overline{\text{Scores}_i(j)} < t \forall i, j \in \N}$, that is to say we eliminate all agents from the selection who's average RBTS score is below $t$. To instantiate this mechanism we used \textsc{Partition}\cite{DBLP:conf/tark/AlonFPT11} as it is a common deterministic strategyproof mechanism. We selected $t$ to be the average RBTS score across all agents, meaning that we removed any agent selected by \textsc{Partition} who was a below average predictor. This resulted in notably diminished performance as can be seen in \cref{fig:det_recall}.

\begin{figure*}[ht]
 \centering
 \includegraphics[width=0.7\textwidth]{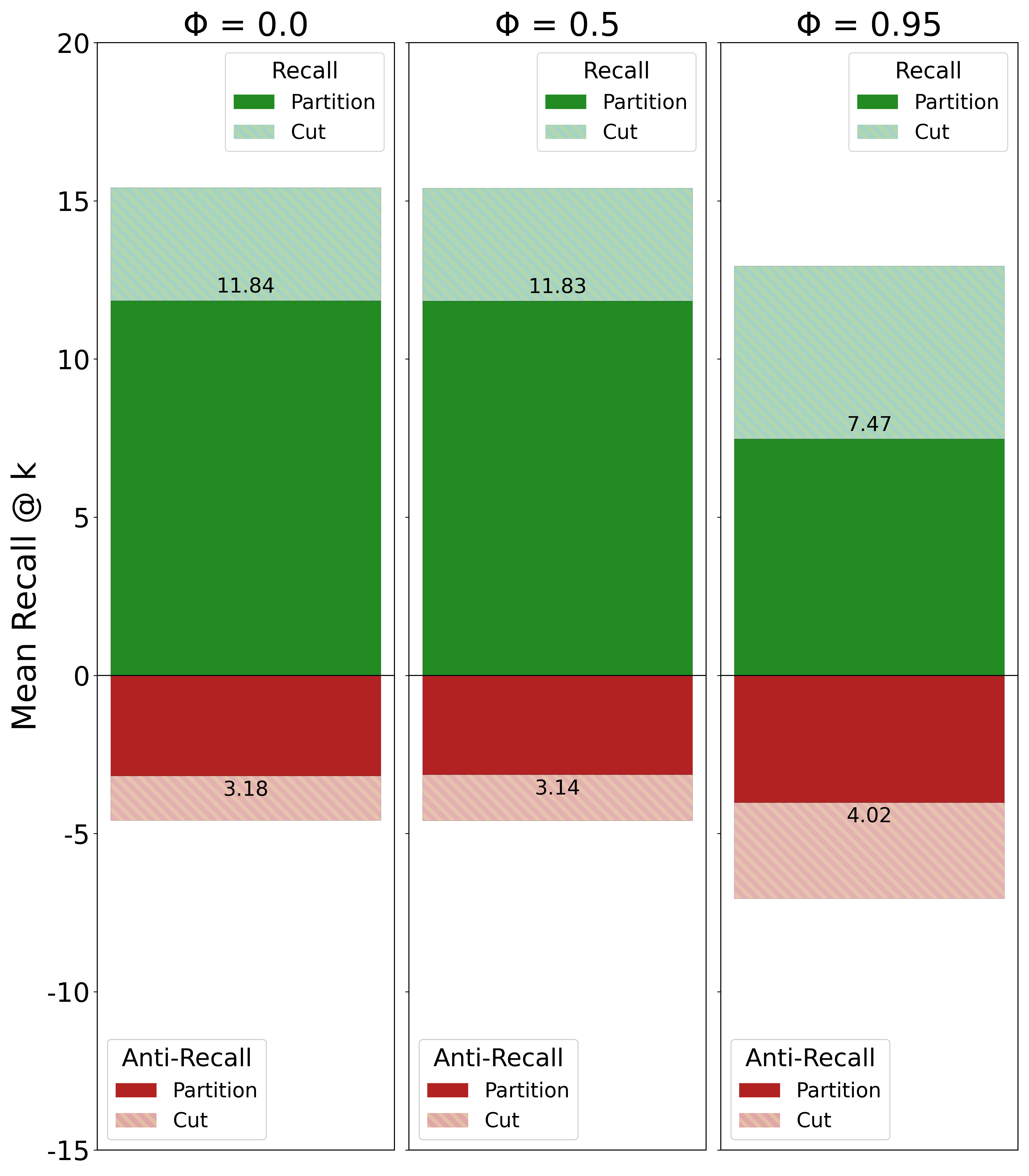}
 \caption{Mean recall and anti-recall of ground truth for deterministic \textsc{PeerBTS} using \textsc{Partition} and elimintating any agents scoring below population average on the RBTS.}
 \label{fig:det_recall}
\end{figure*}

\end{document}